
\documentclass{sig-alternate}

\usepackage{mathrsfs}
\usepackage{algpseudocode}
\usepackage{algorithm}

\newdef{definition}{Definition}
\newdef{example}{Example}
\newtheorem{theorem}{Theorem}
\newtheorem{lemma}{Lemma}
\newtheorem{corollary}{Corollary}
\newtheorem{fact}{Fact}

\DeclareMathOperator{\pred}{pred}
\DeclareMathOperator*{\myand}{\odot}
\DeclareMathOperator*{\myor}{+}
\DeclareMathOperator*{\mfp}{MFP}
\DeclareMathOperator*{\mop}{MOP}
\DeclareMathOperator*{\level}{level}

\DeclareMathOperator*{\len}{len}

\DeclareMathOperator*{\bigO}{O}

\sloppy
\def\sharedaffiliation{%
\end{tabular}
\begin{tabular}{c}}
%
\pagenumbering{arabic}
\makeatletter
\let\@copyrightspace\relax
\makeatother

\begin{document}
%

\title{On the computational complexity of Data Flow Analysis}
%
%
%
%
%

\numberofauthors{2} 
%
\author{
%
%
\alignauthor
Gaurav Sood\\
       \email{gauravsood0289@gmail.com}
\alignauthor
K. Murali Krishnan\\
       \email{kmurali@nitc.ac.in}       
\sharedaffiliation
       \affaddr{Department of Computer Science and Engineering,}\\
       \affaddr{National Institute of Technology Calicut,}\\
       \affaddr{Calicut - 673601, Kerala, India}\\
}

\maketitle
\begin{abstract}
We consider the problem of Data Flow Analysis over monotone data flow frameworks with a finite lattice. The problem of computing the Maximum Fixed Point (MFP) solution is shown to be $\mathscr{P}$\textit{-complete} even when the lattice has just four elements. This shows that the problem is unlikely to be efficiently parallelizable. It is also shown that the problem of computing the Meet Over all Paths (MOP) solution is $\mathscr{NL}$\textit{-complete} (and hence efficiently parallelizable) when the lattice is finite even for non-monotone data flow frameworks. These results appear in contrast with the fact that when the lattice is not finite, solving the MOP problem is undecidable and hence significantly harder than the MFP problem which is polynomial time computable for lattices of finite height.
\end{abstract}

\category{F.1.3}{Computation by Abstract Devices}{Complexity Measures and Classes}[Reducibility and completeness]
\category{D.3.4}{Programming Languages}{Processors}[Optimization]
\category{F.3.2}{Logics and Meanings of Programs}{Semantics of Programming Languages}[Program analysis]

\terms{Theory}

\keywords{Maximum fixed point solution, Meet over all paths solution, $\mathscr{P}$\textit{-complete}, $\mathscr{NL}$\textit{-complete}
} 

\section{Introduction}
The problem of data flow analysis over a monotone data framework with a bounded meet semilattice has been well studied in the context of static program analysis and machine independent compiler optimizations \cite[Sec.~9.2, 9.3]{Aho2006Compilers}. Although the meet over all paths (MOP) solution is desirable, its computation is undecidable in general \cite{springerlink:10.1007/BF00290339}. Iterative fixed point methods as in \cite{Kildall:1973:UAG:512927.512945} are commonly used to find the maximum fixed point (MFP) solution as a conservative approximation to the MOP solution \cite{springerlink:10.1007/BF00290339}. Several important problems like reaching definitions analysis, live variable analysis and available expressions analysis \cite[Sec.~9.2.4--9.2.6]{Aho2006Compilers} are essentially data flow analysis problems over monotone data flow frameworks with a bounded meet semilattice.

In this paper, the computational complexity of MFP and MOP data flow analysis problems over monotone data flow frameworks with a finite bounded meet semilattice is investigated. Since a finite bounded meet semilattice is essentially a finite lattice, we define the problem over finite lattices. We show that computing the MFP solution to data flow analysis problem over a monotone data flow framework with a finite lattice is $\mathscr{P}$\textit{-complete}. We further show that the problem of finding MOP solution is non-deterministic log space complete ($\mathscr{NL}$\textit{-complete}). In fact the proof in Section \ref{Section_MOP} shows that MOP is $\mathscr{NL}$\textit{-complete} even if the functions associated with the vertices of the control flow graph are non-monotone. These results indicate that the MFP problem is unlikely to be in the complexity class $\mathscr{NC}$ (and hence fast parallel algorithms are unlikely to exist for the problem \cite[Chap.~5]{GreHooRuz1995}). The $\mathscr{NL}$ complexity of MOP problem yields $\bigO(\log^2n)$ depth, polynomial sized parallel circuit for the problem. This further leads to the observation that although MOP computation is harder than MFP computation in general, computing MOP solution appears significantly easier than computing MFP over finite lattices.
\section{Background}
Let $(L, \leqslant)$ be a partially ordered set. Let $\bigvee$ and $\bigwedge$ respectively denote the join and meet operations in $L$. A partially ordered set $(L, \leqslant)$ is a meet semilattice, denoted by $(L, \bigwedge)$, if $x \bigwedge y$ exists for all $x, y \in L$. A meet semilattice $(L, \bigwedge)$ is a lattice, denoted by $(L, \bigvee, \bigwedge)$, if $x \bigvee y$ exists for all $x, y \in L$. A meet semilattice $(L, \bigwedge)$ is a bounded meet semilattice, denoted by $(L, \bigwedge, 1)$, if there exists an element $1 \in L$ such that $l \bigwedge 1 = l$ for all $l \in L$. A lattice $(L, \bigvee, \bigwedge)$ is a bounded lattice, denoted by $(L, \bigvee, \bigwedge, 0, 1)$, if there exist elements $0, 1 \in L$ such that $l \bigvee 0 = l$ and $l \bigwedge 1 = l$ for all $l \in L$. A finite bounded meet semilattice $(L, \bigwedge, 1)$ is essentially a finite lattice where join operation is defined as follows: For all $l, l' \in L$, $l \bigvee l' = \bigwedge \left\{ l'' \in L \mid l \leqslant l'' \text{ and } l' \leqslant l'' \right\}$. A 
lattice is complete lattice if $\bigvee S$ and $\bigwedge S$ exist for all $S \subseteq L$. It is easy to see that a finite lattice is complete. \cite{Davey2002Introduction}

Let $L = \left\{ l_1, l_2, \ldots l_m \right\}$ be a finite lattice. Let $L^n = \left\{ (l_{i_1}, \ldots l_{i_n}) \mid \forall 1 \leqslant j \leqslant n, l_{i_j} \in L \right\}$. The tuple $(l_{i_1}, l_{i_2}, \ldots l_{i_n}) \in L^n$ will be denoted by $\langle l_i \rangle_n$ or simply by $\langle l_i \rangle$ when there is no ambiguity about the index set. Let $\mathit{\ell_j}$ denote the $j^{th}$ element of $\langle l_i \rangle$. For all $\langle l_i \rangle, \langle l'_i \rangle \in L^n$, $\langle l_i \rangle \bigvee \langle l'_i \rangle = \langle l_i \bigvee l'_i \rangle$ and $\langle l_i \rangle \bigwedge \langle l'_i \rangle = \langle l_i \bigwedge l'_i \rangle$. \cite{Davey2002Introduction}

A function $f \colon A \to B$ is monotone if for all $x, y \in A$, $x \leqslant y$ implies $f(x) \leqslant f(y)$.

Let $G = (V, E)$ be a directed graph. Let $\deg^-(v)$ and $\deg^+(v)$ respectively denote the indegree and outdegree of vertex $v$. The function $\pred \colon V \to 2^V$ is defined as follows
\begin{equation*}
 \pred(v) = \left\{ u \mid (u, v) \ in \ E \right\}
\end{equation*}
\begin{definition}
 Let $L$ be a lattice and let $f \colon L \to L$. An element $l \in L$ is called a \emph{fixed point} of $f$ if $f(l) = l$. An element $l \in L$ is called the \emph{maximum fixed point} of $f$ if it is a fixed point of $f$ and for every $l' \in L$ whenever $f(l') = l'$ then $l' \leqslant l$. Let $\mfp(f)$ denote the maximum fixed point of $f$ whenever it exists.
\end{definition}
\subsection{Data Flow Analysis (DFA)}
\begin{definition}
 A \emph{control flow graph} is a finite directed graph $G = (V, E, v_s, v_t)$ where $V = \left\{ v_1, v_2, \ldots v_n \right\}$, $v_s \in V$, called \textit{entry}, is a unique vertex satisfying $\deg^-(v_s) = 0$ and $v_t \in V$, called \textit{exit}, is a unique vertex satisfying $\deg^+(v_t) = 0$. Every vertex $v_i \in V$ is reachable from $v_s$.
\end{definition}
\begin{definition}
 A \emph{monotone data flow framework} \cite{springerlink:10.1007/BF00290339} is a triple $D = (L, \bigwedge, F)$ where
 \begin{itemize}
  \item $(L, \bigvee, \bigwedge, 0, 1)$ is a finite lattice where $L = \left\{ l_1, l_2, \ldots l_m \right\}$ with $l_1 = 0$ and $l_m = 1$;
  \item $\bigwedge$ is the confluence operator; and
  \item $F$ is a collection of monotone functions from $L$ to $L$.
  \end{itemize}
\end{definition}
\begin{definition}
 A \emph{Data Flow Analysis (DFA) system} is a 5-tuple $\alpha = (G, D, M, v_\theta, l_\phi)$ where
 \begin{itemize}
  \item $G$ is a control flow graph;
  \item $D$ is a monotone data flow framework;
  \item $M \colon V \to F$ assigns a function $f_i \in F$ to the vertex $v_i$ of $G$; and
  \item $v_\theta \in V$, $l_\phi \in L$.
 \end{itemize}
 \end{definition}
 \subsection{Maximum Fixed Point (MFP) problem} \label{MFP_definition}
 Let $(G, D, M, v_\theta, l_\phi)$ be a DFA system where $G = (V, E, v_s, v_t)$ and $D = (L, \bigwedge, F)$. Then $\overline{f} \colon L^n \to L^n$ is defined as follows
 \begin{equation}
  \overline{f} \Big( \big \langle l_i \big \rangle \Big) = \Big\langle f_{i} \big(\bigwedge_{v_j \in \pred(v_i)} \ell_j \big) \Big\rangle \label{overline_f_def}
 \end{equation}
 Here we assume that $\bigwedge S = 1$ when $S = \emptyset$. With this convention, it is easy to see that $\overline{f}$ is well defined and monotone in $L^n$.
 \begin{theorem}[Knaster--Tarski theorem \cite{tarski1955lattice}]
  Let $(L, \bigvee, \bigwedge, 0, 1)$ be a complete lattice and let $f \colon L \to L$ be a monotone function. Then the MFP of $f$ exists and is unique. \label{knaster_tarski_thrm}
 \end{theorem}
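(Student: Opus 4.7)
The plan is to use the classical post-fixed-point construction. Define the set
\[
 S = \{ x \in L \mid x \leqslant f(x) \},
\]
which is the set of post-fixed points of $f$. I would first observe that $S$ is nonempty: since $0$ is the bottom of $L$, we have $0 \leqslant f(0)$, so $0 \in S$. Because $L$ is complete, $u := \bigvee S$ exists in $L$. My candidate for the MFP is $u$.

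Next I would show that $u$ is actually a fixed point, in two steps. For $u \leqslant f(u)$: take any $x \in S$; then $x \leqslant u$, and monotonicity of $f$ gives $f(x) \leqslant f(u)$, hence $x \leqslant f(x) \leqslant f(u)$. Taking the join over all $x \in S$ yields $u \leqslant f(u)$. For the reverse inequality $f(u) \leqslant u$: apply $f$ to $u \leqslant f(u)$ and use monotonicity to get $f(u) \leqslant f(f(u))$, so $f(u) \in S$, and therefore $f(u) \leqslant \bigvee S = u$. Together these give $f(u) = u$.

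Finally, for maximality and uniqueness: if $l'$ is any fixed point of $f$, then $l' = f(l')$ in particular gives $l' \leqslant f(l')$, so $l' \in S$ and $l' \leqslant u$. Hence $u$ satisfies the definition of the maximum fixed point. Uniqueness follows immediately, since any two maximum fixed points would each dominate the other and must coincide by antisymmetry of $\leqslant$.

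I do not anticipate a serious obstacle here; the only ingredient that has to be present is completeness, which is used exactly once to ensure that $\bigvee S$ exists inside $L$. Everything else is a direct application of monotonicity together with the definition of join. The proof is short enough that the care needed is mostly notational, keeping the two inequalities $u \leqslant f(u)$ and $f(u) \leqslant u$ cleanly separated.
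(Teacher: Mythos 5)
Your proof is correct and complete: the join $u = \bigvee\{x \in L \mid x \leqslant f(x)\}$ of the post-fixed points is shown to be a fixed point via the two inequalities, and maximality and uniqueness follow since every fixed point lies in that set. The paper itself offers no proof of this statement --- it is quoted as Theorem \ref{knaster_tarski_thrm} with a citation to Tarski --- and your argument is precisely the classical one from that source, so it matches the intended justification exactly; the only ingredient you use beyond monotonicity is completeness, invoked once to form $\bigvee S$, which is as lean as this argument gets.
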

 Since every finite lattice is complete, it is clear from Theorem \ref{knaster_tarski_thrm} that the MFP of $\overline{f}$ exists and is unique. Suppose $\langle l_i \rangle$ is the MFP of $\overline{f}$, then we use the notation $\mfp(v_j)$ for the element $\ell_j$.
\begin{definition}
 \emph{Maximum Fixed Point (MFP) problem:} Given a DFA system $(G, D, M, v_\theta, l_\phi)$, decide whether $\mfp(v_\theta) = l_\phi$.
\end{definition}
\subsection{Meet Over all Paths (MOP) problem}
 A path $p$ from vertex $v_{i_1}$ to vertex $v_{i_k}$ in a graph $G$, called a $v_{i_1}$-$v_{i_k}$ path, is a non-empty alternating sequence $v_{i_1}e_{i_1}v_{i_2}e_{i_2} \ldots e_{i_{k - 1}}v_{i_k}$ of vertices and edges such that $e_{i_j} = (v_{i_j}, v_{i_{j + 1}})$ for all $1 \leqslant j \leqslant k - 1$. A $v_{i_1}$-$v_{i_k}$ path is written simply as $v_{i_1}v_{i_2} \ldots v_{i_k}$ when the edges in question are clear. It may be noted that vertices and edges on a path may not be distinct. The length of path $p$ is denoted by $\len(p)$. Let $f_p = f_{i_k} \circ \cdots \circ f_{i_1}$ be called the path function associated with path $p$. Let $P_{ij}$ be the set of all paths from vertex $v_i$ to vertex $v_j$ in $G$.
\begin{definition}
 Given a DFA system $\alpha = (G, D, M, v_\theta, l_\phi)$, the \emph{meet over all paths} solution, denoted MOP, is defined as follows
  \begin{equation*}
  \mop(v_i) = \bigwedge_{p \in P_{si}} f_p(1)
  \end{equation*}
\end{definition}
Since $L^n$ is finite and hence complete, though there could be infinitely many $v_s$-$v_i$ paths, $\mop(v_i)$ is well defined by taking the infimum of all path functions.
 \begin{definition}
 \emph{Meet Over all Paths (MOP) problem:} Given a DFA system $(G, D, M, v_\theta, l_\phi)$, decide whether $\mop(v_\theta) = l_\phi$.
\end{definition}
\subsection{Monotone Circuit Value (MCV) problem}
This problem is used for reduction in Section \ref{Section_MFP} to prove that MFP is $\mathscr{P}$\textit{-complete}.
\begin{definition}
 A \emph{monotone Boolean circuit} \cite[p.~27, 122]{GreHooRuz1995} is a 4-tuple $C = (G, I, v_\theta, \tau)$ where
 \begin{itemize}
  \item $G = (V, E)$ is a finite directed acyclic graph where $V = \left\{ v_1, v_2, \ldots v_n \right\}$, and for all $v_i \in V, \deg^-(v_i) \in \left\{ 0, 2 \right\}$;
  \item $I = \left\{ v_i \in V \mid \deg^-(v_i) = 0 \right\}$ is the set of input vertices;
  \item  $v_\theta \in V$, called \textit{output}, is the unique vertex in $G$ satisfying $\deg^+(v_\theta) = 0$; and
  \item $\tau \colon V \to \left\{ \myand, \myor \right\}$ assigns either the Boolean AND function (denoted by $\odot$) or the Boolean OR function (denoted by $+$) to each vertex of $G$.
 \end{itemize}
\end{definition}
 Let $u_j$ be the $j^{th}$ input vertex of a Boolean circuit $C$ and let $\langle x_i \rangle_{\lvert I \rvert} \in \left\{ 0, 1 \right\}^{\lvert I \rvert} $ be the input to the circuit. The \emph{input value assignment} is a function $\nu \colon I \to \left\{ 0, 1 \right\}$ defined as follows
 \begin{equation*}
  \nu(u_j) = x_j \qquad \forall u_j \in I, x_j \in \left\{ 0, 1 \right\}
 \end{equation*}
 The function $\nu \colon I \to \left\{ 0, 1 \right\}$ can be extended to the function $\nu \colon V \to \left\{ 0, 1 \right\}$ called \emph{value of a node} defined as follows
  \begin{equation*}
  \nu(v_k) = \left\{
  \begin{array}{cl}
   \nu(v_i) \myand \nu(v_j) & \text{if } \tau(v_k) = \myand \text{ and } \\ & \pred(v_k) = \left\{ v_i, v_j \right\},\\
   \nu(v_i) \myor \nu(v_j) & \text{if } \tau(v_k) = \myor \text{ and } \\ & \pred(v_k) = \left\{ v_i, v_j \right\}
  \end{array} \right.
 \end{equation*}
 It is easy to see that $\nu$ is well defined when $G$ is a directed acyclic graph.
\begin{definition}
 An \emph{instance} of Monotone Circuit Value (MCV) problem is a pair $(C, \nu)$ with $C = (G, I, v_\theta, \tau)$ where
 \begin{itemize}
  \item $C$ is a monotone Boolean circuit; and
  \item $\nu \colon I \to \left\{ 0, 1\right\}$ is an input value assignment.
\end{itemize}
\end{definition}
\begin{definition}
 \emph{Monotone Circuit Value (MCV) problem:} Given an instance $(C, \nu)$ of MCV, decide whether $\nu(v_\theta) = 1$ \cite[p.~122]{GreHooRuz1995}.
\end{definition}
\subsection{Graph Meet Reachability (GMR) problem}
This problem will be used as an intermediate problem in Section \ref{Section_MOP} for showing that MOP is $\mathscr{NL}$\textit{-complete}.
\begin{definition}
 Let $A = \left\{ a_1, a_2, \ldots a_n \right\}$ be a finite set and $(L, \bigvee, \bigwedge, 0, 1)$ be a finite lattice where $L = \left\{ l_1, l_2, \ldots l_m \right\}$. A directed graph $G = (V, E)$ is said to be a \emph{product graph} of $A$ and $L$ if
 \begin{itemize}
  \item $V = \left\{ v_{ij} \mid a_i \in A, l_j \in L \right\}$ is the set of vertices; and
  \item $E \subseteq V \times V$ is the set of directed edges.
 \end{itemize}
\end{definition}
\begin{definition}
 An \emph{instance} of Graph Meet Reachability (GMR) problem is a 6-tuple $(G, A, L, v_{\theta\phi}, a_{\theta'}, l_{\phi'})$ where
 \begin{itemize}
  \item $G = (V, E)$ is a product graph of $A$ and $L$;
  \item $v_{\theta\phi} \in V$ where $a_\theta \in A$ and $l_\phi \in L$;
  \item $a_{\theta'} \in A$; and
  \item $l_{\phi'} \in L$.
 \end{itemize}
\end{definition}
Let $R_{i} = \left\{ l_j \mid v_{ij} \text{ is reachable from } v_{\theta\phi} \right\}$
\begin{definition}
\emph{Graph Meet Reachability (GMR) problem:} Given an instance $(G, A, L, v_{\theta\phi}, a_{\theta'}, l_{\phi'})$ of GMR, decide whether
\begin{equation*}
 \bigwedge_{l_i \in R_{\theta'} } l_i = l_{\phi'}
\end{equation*}
\end{definition}
\subsection{Graph Reachability (GR) problem}
Graph Reachability problem is a well known $\mathscr{NL}$\textit{-complete} problem which will be used for reduction in this paper.
\begin{definition}
 An \emph{instance} of Graph Reachability (GR) problem is a triple $(G, v_s, v_t)$ where
  \begin{itemize}
   \item $G = (V, E)$ is a directed graph; and
   \item $v_s, v_t \in V$
  \end{itemize}
\end{definition}
\begin{definition}
 \emph{Graph Reachability (GR) problem:} Given an instance $(G, v_s, v_t)$ of GR, decide whether $v_t$ is reachable from $v_s$.
\end{definition}
\begin{fact}
 GR is $\mathscr{NL}$\textit{-complete} \cite[Theorem 16.2 on p.~398]{Papadimitriou93}.
\end{fact}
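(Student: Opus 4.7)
The statement is the well-known NL-completeness of Graph Reachability, so the plan is to sketch the standard two-sided proof (membership in $\mathscr{NL}$ and $\mathscr{NL}$-hardness), which is what the cited reference in Papadimitriou gives.

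For membership in $\mathscr{NL}$, I would describe a nondeterministic log-space Turing machine that, on input $(G,v_s,v_t)$, stores a ``current vertex'' $v$ initialized to $v_s$ and a binary counter $c$ initialized to $0$. At each step it nondeterministically guesses a vertex $w$, verifies in log space that $(v,w) \in E$ by scanning the edge list, sets $v \leftarrow w$ and increments $c$. If at any point $v = v_t$ the machine accepts; if $c$ exceeds $|V|$ it rejects on that branch. Correctness follows because any $v_s$-$v_t$ path can be shortened to one of length at most $|V|-1$, so $v_t$ is reachable from $v_s$ iff some computation path accepts. The work tape stores only $v$, $w$, and $c$, each taking $\bigO(\log |V|)$ bits.

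For $\mathscr{NL}$-hardness, I would use the configuration-graph reduction. Given any language $A \in \mathscr{NL}$ decided by a nondeterministic log-space machine $N$ with space bound $c\log n$, on input $x$ construct the directed graph $G_x$ whose vertices are the configurations of $N$ on $x$ (encoded as tuples of state, head positions, and work-tape contents) and whose edges $(C, C')$ correspond to legal one-step transitions of $N$. Designate $v_s$ as the initial configuration and, after standard modifications making $N$ have a unique accepting configuration, $v_t$ as that accepting configuration. Then $x \in A$ iff $v_t$ is reachable from $v_s$ in $G_x$. The reduction is log-space computable: enumerating configurations and checking whether two configurations differ by a valid transition of $N$ requires only $\bigO(\log n)$ space since each configuration has $\bigO(\log n)$ length.

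The main obstacle, as always in such reductions, is verifying that the configuration graph can actually be written down by a log-space transducer and that the one-step relation is log-space decidable; this boils down to the observation that reading two configurations and simulating a single step of $N$ requires only pointers and counters of size $\bigO(\log n)$, so no routine difficulty arises. Since both directions go through, $\mathit{GR}$ is $\mathscr{NL}$\textit{-complete}, which is what the fact asserts.
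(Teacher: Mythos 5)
Your proof is correct and is exactly the standard argument (nondeterministic path-guessing for membership, the configuration-graph reduction for hardness) that the cited Theorem 16.2 in Papadimitriou gives; the paper itself offers no proof and simply cites this result as a known fact.
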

\section{Related work}
It is shown in \cite{springerlink:10.1007/BF03036473} that the problem of finding \emph{meet over all valid paths (MVP) solution} to the interprocedural data flow analysis over a distributive data flow framework with possibly infinite (resp.~finite subset) semilattice  is $\mathscr{P}$\textit{-hard} (resp.~$\mathscr{P}$\textit{-complete}).

It is shown in \cite{springerlink:10.1007/BF03036473, Reps:1995:PID:199448.199462} that the problem of finding MFP and MOP solution to data flow analysis over a distributive data flow framework with a distributive sublattice of the power set lattice of a finite set is reducible to \emph{graph reachability} problem. Hence the problem is non-deterministic log space computable i.e., belongs to the complexity class $\mathscr{NL}$ (see \cite[p.~142]{Papadimitriou93} for definition). Since $\mathscr{NL} \subseteq \mathscr{NC}$ \cite[Theorem 16.1 on p.~395]{Papadimitriou93}, \cite{borodin1977relating} and $\mathscr{NC}$ admits fast parallel solutions, these results show that the above problem admits fast parallel algorithms.\\

The following is an outline for rest of the paper. In Section \ref{Section_MFP}, we show that MFP is $\mathscr{P}$\textit{-complete} by reduction from MCV. In Section \ref{Section_GMR}, we give  an $\mathscr{NL}$ algorithm for computing GMR. In Section \ref{Section_MOP}, we prove that MOP is log space reducible to GMR thereby showing that MOP is in $\mathscr{NL}$. Completeness of MOP w.r.t.~the class $\mathscr{NL}$ follows easily by a log space reduction from GR to MOP.
\section{MFP is P-complete} \label{Section_MFP}
In this section, we give a log space reduction from MCV to MFP. Since MFP is in $\mathscr{P}$ \cite{Kildall:1973:UAG:512927.512945} and MCV is $\mathscr{P}$\textit{-complete} \cite{Goldschlager:1977:MPC:1008354.1008356}, the reduction implies that MFP is also $\mathscr{P}$\textit{-complete}.
\subsection{A reduction from MCV to MFP} \label{Subsec_MFP_redct}
Given an instance $\alpha = (C, \nu)$ of MCV with $C = (G, I, v_\theta, \tau)$. Construct an instance of MFP $\alpha' = (G', D, M, v^1_\theta, (1, 1))$ as follows
\begin{itemize}
 \item $G' = (V', E', v^1_0, v^1_\theta)$ where
 \begin{itemize}
  \item $V' = \left\{ v^1_0 \right\} \bigcup \left\{ v_i^0 \mid v_i \in V(G) \setminus I \right\}\\ \text{ } \qquad \bigcup \left\{ v_i^1 \mid v_i \in V(G) \right\}$
  \item $E'$ is defined as follows
  \begin{itemize}
   \item For each input vertex $v_i \in I$ add the edge $(v^1_0, v^1_i)$ to $E'$
   \item for each vertex $v_i \in V$ add the edge $(v^0_i, v^1_i)$ to $E'$
   \item for each vertex $v_k \in V \setminus I$ with predecessors $v_i$ and $v_j$ with $i < j$ add the edges $(v^1_i, v^0_k)$, $(v^1_j, v^1_k)$ to $E'$. Note that each $v^0_k \in V \setminus I$ has a unique predecessor in $G$.
  \end{itemize}
 \end{itemize}
 \item $D = (L, \bigwedge, F)$ is defined as follows
  \begin{itemize}
   \item $L = \left\{ (0, 0), (0, 1), (1, 0), (1, 1) \right\}$ where
   \begin{itemize}
    \item $\bigvee = $ bitwise $\myor$ operation in $\left\{ 0, 1 \right\} \times \left\{ 0, 1 \right\}$
    \item $\bigwedge = $ bitwise $\myand$ operation in $\left\{ 0, 1 \right\} \times \left\{ 0, 1 \right\}$
   \end{itemize}
   \item $\bigwedge$ is the confluence operator
   \item $F = \left\{ g_I, g_0, g_1, g_{sw}, g_{\myand}, g_{\myor} \right\}$ where
   \begin{itemize}
    \item $g_I \colon L \to L$ is the identity function
    \item $g_0 \colon L \to L$ is defined as follows
    \begin{equation*}
     g_0((a_1, a_2)) = (1, 0) \qquad \forall (a_1, a_2) \in L
    \end{equation*}
    \item $g_1 \colon L \to L$ is defined as follows
    \begin{equation*}
     g_1((a_1, a_2)) = (1, 1) \qquad \forall (a_1, a_2) \in L
    \end{equation*}
    \item The swap function $g_{sw} \colon L \to L$ is defined as follows
    \begin{equation*}
     g_{sw}((a_1, a_2)) = (a_2, a_1) \qquad \forall (a_1, a_2) \in L
    \end{equation*}
    \item $g_{\myand} \colon L \to L$ is defined as follows
    \begin{equation*}
     g_{\myand}((a_1, a_2)) = (1, a_1 \myand a_2) \qquad \forall (a_1, a_2) \in L
    \end{equation*}
    \item $g_{\myor} \colon L \to L$ is defined as follows
    \begin{equation*}
     g_{\myor}((a_1, a_2)) = (1, a_1 \myor a_2) \qquad \forall (a_1, a_2) \in L
    \end{equation*}
   \end{itemize}
   It is easy to see that all functions in $F$ are monotone.
 \end{itemize}
    \item $M \colon V' \to F$ is defined as follows
      \begin{equation*}
       M(v^1_0) = f^1_0 = g_I
      \end{equation*}
      \begin{equation*}
       M(v^0_i) = f^0_i = g_{sw} \qquad \forall v^0_i \in V'
      \end{equation*}
      \begin{equation*}
       M(v^1_i) = f^1_i = \left\{
	\begin{array}{cl}
	 g_0 & \text{ if } v_i \in I \text{ and } \nu(v_i) = 0\\
	 g_1 & \text{ if } v_i \in I \text{ and } \nu(v_i) = 1
	\end{array}\right.
      \end{equation*}
      More compactly $f^1_i = g_{\nu(v_i)}$ if $i \in I$.
      \begin{equation*}
       M(v^1_i) = f^1_i = \left\{
	\begin{array}{cl}
	 g_{\myand} & \text{ if } v_i \in V \setminus I \text{ and } \tau(v_i) = \myand\\
	 g_{\myor} & \text{ if } v_i \in V \setminus I \text{ and } \tau(v_i) = \myor
	\end{array}\right.
      \end{equation*}
\end{itemize}
\begin{figure}
\centering
\includegraphics{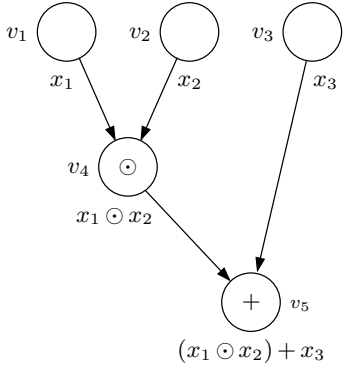}
\caption{An instance of MCV}
\label{fig_MCVP}
\end{figure}
Figure \ref{fig_MCVP} shows an instance of MCV where $I = \left\{ v_1, v_2, v_3 \right\}$. Figure \ref{fig_MFPP} shows an instance of MFP constructed from the MCV instance of Figure \ref{fig_MCVP}.
\subsection{Proof of correctness}
Let $G = (V, E)$ be a directed acyclic graph. $\level(v) \colon V \to \mathbb{N}$ is defined as follows
\begin{equation*}
 \level(v_i) = \left\{
      \begin{array}{cl}
       0  & \text{if } \deg^-(v_i) = 0,\\
       1 + \displaystyle \max_{v_j \in \pred(v_i)} \level(v_j) & \text{if } \deg^-(v_i) > 0
      \end{array} \right.
\end{equation*}
It is easy to see that $\level$ function is well defined.
\begin{figure}
\centering
\includegraphics{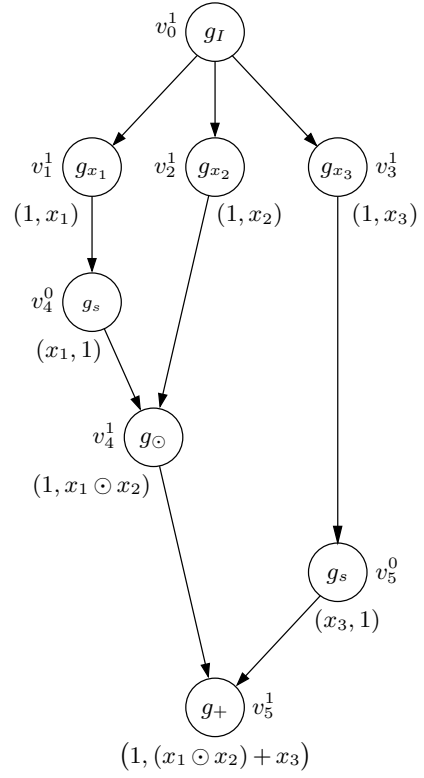}
\caption{Data flow graph corresponding to MCV instance in Figure \ref{fig_MCVP}}
\label{fig_MFPP}
\end{figure}
\begin{lemma} \label{main_lemma}
Let $\alpha = (C, \nu)$ be an instance of MCV with $C = (G, I, v_\theta, \tau)$ and $G = (V, E)$. Let $\alpha' = (G', D, M, v^1_\theta, (1, 1))$ be the instance of MFP as constructed in Section \ref{Subsec_MFP_redct}. Then $\overline{f}$ has a unique fixed point. For all $v_i \in V$, $\mfp(v^1_i) = (1, \nu(v_i))$ and for all $v_i \in V \setminus I$, if $v^1_j$ is the predecessor of $v^0_i$ in $G'$, then $\mfp(v^0_i) = (\nu(v_j), 1)$.
\end{lemma}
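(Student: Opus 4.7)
The plan is to exploit the fact that $G'$ is acyclic. Since $G$ is a directed acyclic graph (being a monotone Boolean circuit), every edge in $G'$ either emanates from the source $v^1_0$, goes within a single $G$-vertex from $v^0_i$ to $v^1_i$, or goes from some $v^1_i$ to a vertex $v^0_k$ or $v^1_k$ with $v_i \in \pred(v_k)$ in $G$. Hence every edge in $G'$ strictly advances the lexicographic pair (topological rank in $G$, $\{0 < 1\}$ marker), so $G'$ is itself a DAG. On a DAG with monotone node functions, $\overline{f}$ has a unique fixed point: processing vertices in topological order, the value at each vertex is forced to equal $f_i$ applied to the meet of the already-determined predecessor values. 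This reduces the first claim to verifying the explicit values asserted in the remaining parts of the lemma.

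For those, I would induct on $\level(v_i)$ in the original circuit graph $G$. In the base case $v_i \in I$ is an input vertex; the only predecessor of $v^1_i$ in $G'$ is the source $v^1_0$, and since $v^1_0$ has no predecessors in $G'$, $\mfp(v^1_0) = g_I(\bigwedge \emptyset) = g_I((1,1)) = (1,1)$. Applying $f^1_i = g_{\nu(v_i)}$ then produces $(1,\nu(v_i))$ by the definitions of $g_0$ and $g_1$, and since no $v^0_i$ exists for input vertices, the base case is complete.

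For the inductive step, let $v_k \in V \setminus I$ with $\pred(v_k) = \{v_i, v_j\}$ in $G$ and $i < j$. By the construction of $E'$, the unique predecessor of $v^0_k$ in $G'$ is $v^1_i$, while the predecessors of $v^1_k$ in $G'$ are $v^0_k$ and $v^1_j$. The inductive hypothesis gives $\mfp(v^1_i) = (1,\nu(v_i))$ and $\mfp(v^1_j) = (1,\nu(v_j))$, so
\[
 \mfp(v^0_k) = g_{sw}\bigl((1,\nu(v_i))\bigr) = (\nu(v_i),1),
\]
which is exactly the third claim applied to $v^0_k$. Then
\[
 \mfp(v^1_k) = f^1_k\bigl((\nu(v_i),1) \bigwedge (1,\nu(v_j))\bigr) = f^1_k\bigl((\nu(v_i),\nu(v_j))\bigr),
\]
and a two-way case split on $\tau(v_k) \in \{\odot, +\}$ together with the defining formulas for $g_\odot$ and $g_+$ produces $(1,\nu(v_k))$ in either case, closing the induction.

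The only step that calls for genuine attention is verifying that the bitwise meet at $v^1_k$ correctly assembles the two predecessor Booleans into the single pair $(\nu(v_i),\nu(v_j))$ that the gate functions then consume. The swap function $g_{sw}$ is chosen precisely so that the value routed through $v^0_k$ occupies the first coordinate while the value arriving directly through $v^1_j$ occupies the second, and since bitwise meet with $1$ is the identity in each coordinate, the two Booleans are preserved without interference. Once that encoding is confirmed, the rest of the argument is direct substitution from the definitions.
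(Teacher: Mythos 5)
Your proposal is correct and follows essentially the same route as the paper's proof: the values are pinned down by an induction on $\level$ in the circuit, with the same base-case computation through $v^1_0$ and the same $g_{sw}$/meet/gate-function calculation at internal vertices. The only cosmetic difference is that you establish uniqueness up front via acyclicity of $G'$ and topological-order forcing, whereas the paper takes an arbitrary fixed point and shows within the same level induction that each component is uniquely determined (using Knaster--Tarski only for existence); the substance of the argument is identical.
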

\begin{proof}
 $\mfp(\overline{f})$ exists by Theorem \ref{knaster_tarski_thrm}. Therefore, $\overline{f}$ has at least one fixed point. Let $\langle l_i \rangle$ be an arbitrary fixed point of $\overline{f}$. Let $\ell^0_j$ and $\ell^1_j$ denote the elements of $\langle l_i \rangle$ corresponding to vertices $v^0_j$ and $v^1_j$ respectively. We first prove that $\ell^k_i$ is uniquely defined for all vertices $v^k_i$ of $G'$.
 
 Since $\langle l_i \rangle$ is a fixed point of $\overline{f}$, so, $\overline{f}(\langle l_i \rangle) = \langle l_i \rangle$. By Equation \eqref{overline_f_def},  $\langle f^k_i (\displaystyle \bigwedge_{v^{k'}_j \in \pred(v^k_i)} \ell^{k'}_j ) \rangle = \langle l_i \rangle$. So,
 \begin{equation}
  \ell^k_i = f^k_{i} (\bigwedge_{v^{k'}_j \in \pred(v^k_i)} \ell^{k'}_j ) \qquad \forall v^k_i \in V' \label{Eq_2}
 \end{equation}
 $\ell^1_0 = g_I(\emptyset) = g_I(1) = 1$ is uniquely defined.
 Let $v_i$ be an arbitrary vertex of $V$. We prove the uniqueness of $\ell^k_i$ by induction on $\level(v_i)$.
 \begin{itemize}
  \item \emph{Base case:} $\level(v_i) = 0$ i.e.~$v_i \in I$. So, $f^1_i = g_{\nu(v_i)}$ and $g_{\nu(v_i)}((a_1, a_2)) = (1, \nu(v_i))$ $\forall (a_1, a_2) \in L$ by definition. From Equation \ref{Eq_2}, $\ell^1_i = f^1_i(\ell^1_0) = g_{\nu(v_i)}(\ell^1_0) = (1, \nu(v_i))$ is uniquely defined.
  \item \emph{Inductive step:}
  Let the theorem be true $\forall v_i \in V$ such that $\level(v_i) < m$. Let $\level(v_k) = m$. Let $v_i, v_j, i < j$ be predecessors of $v_k$ in $G$. By definition of $\level$ function, $\level(v_i) < \level(l_k) = m$ and $\level(v_j) < \level(l_k) = m$. By induction hypothesis, $\ell^1_i = (1, \nu(v_i))$ and $\ell^1_j = (1, \nu(v_j))$.
  
  From Equation \ref{Eq_2}, $\ell^0_k = g_{sw}(\ell^1_i) = g_{sw}((1, \nu(v_i))) = (\nu(v_i), 1)$ is uniquely defined.
  
  Let $\tau(v_k) = \odot$. From Equation \ref{Eq_2}, $\ell^1_k = f^1_k(\ell^0_k \bigwedge \ell^1_j) = g_{\odot}((\nu(v_i), 1) \bigwedge (1, \nu(v_j))) = g_{\odot}(\nu(v_i), \nu(v_j)) = (1, \nu(v_i) \odot \nu(v_j)) = (1, \nu(v_k))$ is uniquely defined.
    
  The case $\tau(v_k) = +$ is proved similarly.
\end{itemize}
Since $\ell^k_i$ is uniquely defined for all vertices $v^k_i$ of $G'$, $\langle l_i \rangle$ is unique and hence $\langle l_i \rangle$ is the maximum fixed point of $\overline{f}$. So, for all $v_i \in V$, $\mfp(v^1_i) = \ell^1_i = (1, \nu(v_i))$ and for all $v_i \in V \setminus I$, if $v^1_j$ is the predecessor of $v^0_i$ in $G'$, then $\mfp(v^0_i) = \ell^0_i = (\nu(v_j), 1)$.
\end{proof}
\begin{corollary}
 Let $\alpha = (C, \nu)$ be an instance of MCV with $C = (G, I, v_\theta, \tau)$ and $G = (V, E)$. Let $\alpha' = (G', D, M, v^1_\theta, (1, 1))$ be the instance of MFP as constructed in Section \ref{Subsec_MFP_redct}. $\nu(v_\theta) = 1 \iff \mfp(v^1_\theta) = (1, 1)$.
\end{corollary}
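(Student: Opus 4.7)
The plan is to derive this corollary as an almost immediate consequence of Lemma \ref{main_lemma}. That lemma already establishes the stronger identity $\mfp(v^1_i) = (1, \nu(v_i))$ for every vertex $v_i \in V$ of the original circuit $G$. I would begin by specializing this to the output vertex $v_\theta$ (which lies in $V$ by definition of a monotone Boolean circuit), yielding $\mfp(v^1_\theta) = (1, \nu(v_\theta))$.

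From this single identity, both directions of the biconditional fall out by inspecting the second coordinate of the pair. For the forward direction, assuming $\nu(v_\theta) = 1$, a direct substitution into $(1, \nu(v_\theta))$ gives $(1, 1)$. For the reverse direction, if $\mfp(v^1_\theta) = (1, 1)$, then combining with $\mfp(v^1_\theta) = (1, \nu(v_\theta))$ and comparing second components forces $\nu(v_\theta) = 1$. Both steps use only the definition of equality in $L = \{0,1\} \times \{0,1\}$, so no further lattice-theoretic machinery is needed.

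There is essentially no obstacle here: the genuine work has already been carried out in Lemma \ref{main_lemma}, whose induction on $\level(v_i)$ handles the propagation of the values $\nu(v_i)$ through the two-coordinate encoding. The only thing worth explicitly noting in the writeup is that the construction of $\alpha'$ from $\alpha$ in Section \ref{Subsec_MFP_redct} is computable in logarithmic space (each new vertex, edge, and function label depends only on local information about $G$, $I$, $\nu$, and $\tau$), so that the corollary, together with the $\mathscr{P}$-membership of MFP from \cite{Kildall:1973:UAG:512927.512945} and the $\mathscr{P}$-completeness of MCV, establishes that MFP is $\mathscr{P}$\textit{-complete}.
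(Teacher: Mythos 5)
Your proposal is correct and matches the paper's intent exactly: the corollary is stated as an immediate consequence of Lemma \ref{main_lemma}, obtained by specializing $\mfp(v^1_i) = (1, \nu(v_i))$ to $v_i = v_\theta$ and comparing second coordinates, which is precisely what you do. The remark about log-space computability of the reduction belongs to the subsequent theorem rather than this corollary, but it is harmless and accurate.
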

\begin{theorem}
 MFP is $\mathscr{P}$\textit{-complete}.
\end{theorem}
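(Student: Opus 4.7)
The plan is to combine the two ingredients that the preceding material already supplies: membership of MFP in $\mathscr{P}$, and $\mathscr{P}$-hardness via the reduction built in Section \ref{Subsec_MFP_redct}. For membership, I would simply invoke Kildall's iterative fixed point algorithm \cite{Kildall:1973:UAG:512927.512945}: on a finite lattice $L$ the chain of iterates of $\overline{f}$ strictly descends, so the algorithm converges in at most $n \cdot \mathrm{height}(L)$ rounds, each round doing $O(|E'|)$ function evaluations, yielding a polynomial-time decision procedure for MFP.

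For hardness, I would quote the corollary to Lemma \ref{main_lemma}, which says exactly that $\nu(v_\theta) = 1$ if and only if $\mfp(v^1_\theta) = (1,1)$ on the constructed instance $\alpha'$. Since MCV is $\mathscr{P}$-complete \cite{Goldschlager:1977:MPC:1008354.1008356}, it suffices to argue that the mapping $\alpha \mapsto \alpha'$ of Section \ref{Subsec_MFP_redct} is computable in logarithmic space. I would verify this by walking through the construction piece by piece: the vertex set $V'$ has at most $2|V|+1$ elements obtained by appending a superscript bit to the names of vertices in $V$, which can be enumerated with two counters of $O(\log|V|)$ bits; each edge of $E'$ is determined by examining a single edge of $G$ (for the $(v^1_i,v^0_k)$ and $(v^1_j,v^1_k)$ edges, which requires deciding the ``$i<j$'' ordering of the two predecessors of $v_k$, a local comparison), or by locating an input vertex (for the $(v^1_0,v^1_i)$ edges), or by pairing the two copies of a vertex (for the $(v^0_i,v^1_i)$ edges); the lattice $D$ is of constant size and can be output verbatim; and the labeling $M$ is decided locally from $\tau(v_i)$ or $\nu(v_i)$ in $G$. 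All counters, indices, and case splits fit in $O(\log|\alpha|)$ workspace.

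Combining, MFP is in $\mathscr{P}$ and MCV log-space reduces to MFP, so MFP is $\mathscr{P}$-complete. The only subtle point I expect to need care with is the edge $(v^1_i,v^0_k),(v^1_j,v^1_k)$ case, because the specification depends on $i<j$; but this is still a local test on two predecessor names, so it adds no real difficulty beyond what logspace machines routinely do. Everything else is bookkeeping that falls straight out of the lemma.
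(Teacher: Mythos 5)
Your proposal is correct and follows exactly the paper's argument: membership via Kildall's polynomial-time algorithm, hardness via the $\mathscr{P}$-completeness of MCV together with the corollary to Lemma \ref{main_lemma}, plus the (routine) observation that the reduction of Section \ref{Subsec_MFP_redct} is log-space computable. The only difference is that you spell out the log-space bookkeeping that the paper dismisses as easy, which is harmless.
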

\begin{proof}
 A polynomial time algorithm for MFP is given in \cite{Kildall:1973:UAG:512927.512945}. MCV is shown $\mathscr{P}$\textit{-complete} in \cite{Goldschlager:1977:MPC:1008354.1008356}. It is easy to see that the above reduction is computable in log space. Hence MFP is $\mathscr{P}$\textit{-complete}.
\end{proof}
\section{An algorithm for GMR} \label{Section_GMR}
Algorithm \ref{algo_GMR} is an algorithm for deciding GMR.
\begin{algorithm}
\caption{Algorithm for GMR}
\begin{algorithmic}[1]
\Procedure{GMRA}{$G, A, L, v_{\theta\phi}, a_{\theta'}, l_{\phi'}$}
\State $temp \leftarrow 1$
\For{$i \leftarrow 1, n$}
\If{$v_{\theta' i}$ is reachable from $v_{\theta\phi}$}
\State $temp \leftarrow temp \bigwedge l_i$
\EndIf
\EndFor
\If{$temp = l_{\phi'}$}
\State \textbf{return} True
\Else 
\State \textbf{return} False
\EndIf
\EndProcedure
\end{algorithmic}
\label{algo_GMR}
\end{algorithm}

The observation below is a direct consequence of the above algorithm.
\begin{lemma}
 Let $\alpha = (G, A, L, v_{\theta\phi}, a_{\theta'}, l_{\phi'})$ be an instance of GMR. Let $R_{i} = \left\{ l_j \mid v_{ij} \text{ is reachable from } v_{\theta\phi} \right\}$. Then Algorithm \ref{algo_GMR} returns true $\iff \displaystyle \bigwedge_{l_i \in R_{\theta'} } l_i = l_{\phi'}$.
\end{lemma}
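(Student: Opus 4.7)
The statement is essentially a correctness claim for a very transparent algorithm, so the plan is to establish a loop invariant and then read off the lemma from its state at termination.

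First, I would introduce the notation $R_{\theta'}^{(i)} = \{\, l_j \mid j \leqslant i \text{ and } v_{\theta' j} \text{ is reachable from } v_{\theta\phi}\,\}$, together with the usual convention that $\bigwedge \emptyset = 1$ used already in the paper. The key invariant to prove by induction on the loop counter $i$ is:
\begin{equation*}
  \text{after iteration } i,\ temp = \bigwedge_{l_j \in R_{\theta'}^{(i)}} l_j.
\end{equation*}
The base case is immediate from the initialization $temp \leftarrow 1$, which matches $R_{\theta'}^{(0)} = \emptyset$ and the empty-meet convention. For the inductive step, I would split on whether $v_{\theta' i}$ is reachable from $v_{\theta\phi}$: if it is, then $R_{\theta'}^{(i)} = R_{\theta'}^{(i-1)} \cup \{l_i\}$ and the algorithm updates $temp \leftarrow temp \bigwedge l_i$, which by the induction hypothesis and associativity/commutativity of $\bigwedge$ on the finite lattice $L$ gives the claim; if it is not, then $R_{\theta'}^{(i)} = R_{\theta'}^{(i-1)}$ and $temp$ is not modified, so the invariant is preserved.

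Setting $i = n$, the invariant specialises to $temp = \bigwedge_{l_j \in R_{\theta'}} l_j$ at the end of the loop. Since Algorithm \ref{algo_GMR} returns true precisely when $temp = l_{\phi'}$, the lemma follows immediately.

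I do not expect any real obstacle here: the meet operation over a finite lattice is well defined and the loop only reads reachability information and accumulates meets, so the invariant-based argument is essentially a one-line induction. The only subtlety worth flagging is the empty-meet convention for the base case, which I would state explicitly to avoid any ambiguity when $R_{\theta'} = \emptyset$ (in which case the algorithm returns true iff $l_{\phi'} = 1$, consistent with the stated iff).
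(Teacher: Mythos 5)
Your proof is correct, and it is essentially the paper's argument made explicit: the paper states this lemma as a direct consequence of Algorithm \ref{algo_GMR} without writing out a proof, and your loop-invariant induction (with the $\bigwedge \emptyset = 1$ convention handling the case $R_{\theta'} = \emptyset$) is exactly the formalization that justification presupposes. No gaps.
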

\begin{lemma}
 GMR is computable in non-deterministic log space.
\end{lemma}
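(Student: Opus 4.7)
The plan is to argue that Algorithm \ref{algo_GMR} can be realized by a non-deterministic Turing machine running in $O(\log(nm))$ workspace. Only two pieces of state need to be maintained across iterations of the main loop: the counter $i \in \{1, \ldots, n\}$, requiring $O(\log n)$ bits, and the accumulated value $temp \in L$, requiring $O(\log m)$ bits. Each individual reachability test on $G$ is performed by a subroutine whose workspace is reused between iterations, so the total workspace remains logarithmic in the input size.

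The delicate point is that Algorithm \ref{algo_GMR} branches on the \emph{truth value} of a reachability query, whereas an NL computation only naturally certifies existential witnesses. The Immerman--Szelepcs\'enyi theorem ($\mathscr{NL} = \mathit{co}\mathscr{NL}$) removes this obstacle. At each iteration the machine non-deterministically guesses a bit $b_i$ claiming whether $v_{\theta' i}$ is reachable from $v_{\theta\phi}$; if $b_i = 1$ it certifies the guess by the standard NL reachability procedure, and if $b_i = 0$ it certifies the guess by the Immerman--Szelepcs\'enyi inductive counting procedure for non-reachability. A branch that fails to certify any of its guesses rejects, so on every surviving branch all the $b_i$ are correct. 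Consequently every surviving branch updates $temp$ in lockstep with Algorithm \ref{algo_GMR}, and the machine accepts iff this common final value of $temp$ equals $l_{\phi'}$.

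Correctness is then inherited from the preceding lemma, which already relates the output of Algorithm \ref{algo_GMR} to the condition $\bigwedge_{l_i \in R_{\theta'}} l_i = l_{\phi'}$. The space analysis combines $O(\log n)$ for the counter, $O(\log m)$ for $temp$, and $O(\log(nm))$ for the reusable reachability/non-reachability subroutine, each of which is logarithmic in the input size. The principal obstacle to overcome is precisely the need to branch on \emph{negative} reachability answers; without invoking $\mathscr{NL} = \mathit{co}\mathscr{NL}$, a plain NL machine could not distinguish a genuine proof of unreachability from a merely ill-chosen non-deterministic guess, and the step ``if $v_{\theta' i}$ is reachable then update $temp$'' could not be simulated faithfully.
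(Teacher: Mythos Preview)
Your proof is correct and follows the same approach as the paper: analyze the space usage of Algorithm~\ref{algo_GMR}, observing that the loop counter and the accumulator $temp$ take logarithmic space and that each reachability test is an $\mathscr{NL}$ subroutine. You are in fact more careful than the paper, which does not explicitly invoke Immerman--Szelepcs\'enyi to justify branching on \emph{negative} reachability answers; the paper's short proof leaves that closure property implicit.
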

\begin{proof}
 Variables $\mathit{temp}$ and $i$ take $\Theta(\log \lvert L \rvert)$ space. Since graph reachability takes up only non-deterministic log space \cite[Example 2.10 on p.~48]{Papadimitriou93}), Line 4 takes non-deterministic $\bigO(\log \lvert V \rvert) = \bigO(\log ( \lvert A \rvert \cdot \lvert L\rvert ))$ space. So, GMR is computable in non-deterministic log space.
\end{proof}
\section{MOP is NL-complete} \label{Section_MOP}
In this section, we give a log space reduction from MOP to GMR. Since GMR is non-deterministic log space computable, this implies that MOP can also be computed in non-deterministic log space.
\subsection[A reduction from MOP to GMR]{A reduction from MOP to GMR\footnote{This reduction is similar to the reduction in \cite{Reps:1995:PID:199448.199462}.}} \label{Subsec_MOP_redct}
Given an instance $\alpha = (G, D, M, v_\theta, l_\phi)$ of MOP with $G = (V, E, v_s, v_t)$, $D = (L, \bigwedge, F)$ and $L = \left\{0 = l_1, l_2, \ldots l_m = 1 \right\}$. Construct an instance of GMR $\alpha' = (G', A, L, v^0_{sm}, v^1_\theta, l_\phi)$ as follows
\begin{itemize}
 \item $A = \left\{ v^0_i \mid v_i \in V(G) \right\} \bigcup \left\{ v^1_i \mid v_i \in V(G) \right\}$
 \item $G' = (V', E')$ where
 \begin{itemize}
  \item $V' = \left\{ v^k_{ij} \mid v^k_i \in A, l_j \in L \right\}$
  \item $E'$ is defined as follows
  \begin{itemize}
   \item For each vertex $v_i \in V, l_j \in L$, if $f_i(l_j) = l_k$, add the edge $(v^0_{ij}, v^1_{ik})$ to $E'$.
   \item For each edge $(v_i, v_j) \in E, l_k \in L$, add the edge $(v^1_{ik}, v^0_{jk})$ to $E'$.
  \end{itemize}
 \end{itemize}
\end{itemize}
\begin{example}
Figure \ref{fig_MOPP} shows a Data Flow Graph and a lattice. Let the set $F$ of monotone functions be defined as follows: $F = \left\{ f_1, f_2, f_3, f_4, f_5, f_6 \right\}$ where
\begin{itemize}
 \item $f_1 = \left\{ (l_1, l_1), (l_2, l_4), (l_3, l_4), (l_4, l_3), (l_5, l_5) \right\}$
 \item $f_2 = \left\{ (l_1, l_1), (l_2, l_3), (l_3, l_3), (l_4, l_5), (l_5, l_5) \right\}$
 \item $f_3 = \left\{ (l_1, l_2), (l_2, l_3), (l_3, l_3), (l_4, l_2), (l_5, l_3) \right\}$
 \item $f_4 = \left\{ (l_1, l_1), (l_2, l_3), (l_3, l_5), (l_4, l_4), (l_5, l_5) \right\}$
 \item $f_5 = \left\{ (l_1, l_2), (l_2, l_2), (l_3, l_5), (l_4, l_3), (l_5, l_5) \right\}$
 \item $f_6 = \left\{ (l_1, l_1), (l_2, l_4), (l_3, l_5), (l_4, l_4), (l_5, l_5) \right\}$
\end{itemize}
Figure \ref{fig_GMR} shows the corresponding product graph.
\end{example}
\subsection{Proof of correctness}
\begin{lemma}
 Let $\alpha = (G, D, M, v_\theta, l_\phi)$ be an instance of MOP with $G = (V, E, v_s, v_t)$ and $D = (L, \bigwedge, F)$. Let $\alpha' = (G', A, L, v^0_{sm}, v^1_\theta, l_\phi)$ be an instance of GMR as constructed in Section \ref{Subsec_MOP_redct}. Then for all $v_i \in V$, there exists a $v_s$-$v_i$ path $p$ in G such that $f_p(1) = l_j \iff v^1_{ij}$ is reachable from $v^0_{sm}$ in $G'$.
\end{lemma}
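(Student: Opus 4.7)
The plan is to prove both implications together by induction on the length of the relevant walk, exploiting the strictly bipartite structure of $G'$. Every edge of $G'$ either joins a layer-$0$ vertex $v^0_{ij}$ to a layer-$1$ vertex $v^1_{ik}$ (via the rule $f_i(l_j)=l_k$, which pins the first index and ``applies'' the transfer function), or joins a layer-$1$ vertex $v^1_{ik}$ to a layer-$0$ vertex $v^0_{jk}$ (via an edge $(v_i,v_j)\in E$, which pins the second index and moves along $G$). Consequently, every walk from $v^0_{sm}$ to $v^1_{ij}$ has odd length and alternates strictly between the two layers.

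For the direction ($\Rightarrow$), I would take a $v_s$-$v_i$ path $p=v_{i_1}v_{i_2}\cdots v_{i_k}$ with $v_{i_1}=v_s$, $v_{i_k}=v_i$, and $f_p(1)=l_j$, and set $l_{j_0}:=l_m=1$ together with $l_{j_t}:=(f_{i_t}\circ\cdots\circ f_{i_1})(1)$ for $1\leqslant t\leqslant k$. The explicit sequence
\[ v^0_{s,m},\ v^1_{i_1,j_1},\ v^0_{i_2,j_1},\ v^1_{i_2,j_2},\ \ldots,\ v^0_{i_k,j_{k-1}},\ v^1_{i_k,j_k} \]
is then a walk in $G'$: each layer-$0$-to-layer-$1$ step is a legal edge because $f_{i_t}(l_{j_{t-1}})=l_{j_t}$ by construction, and each layer-$1$-to-layer-$0$ step is legal because $(v_{i_t},v_{i_{t+1}})\in E$. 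Since $l_{j_k}=f_p(1)=l_j$, the endpoint is exactly $v^1_{ij}$.

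For the direction ($\Leftarrow$), I would start from an arbitrary walk $q$ in $G'$ from $v^0_{sm}$ to $v^1_{ij}$. By the bipartite observation above, $q$ must have the shape
\[ v^0_{c_1,d_0},\ v^1_{c_1,d_1},\ v^0_{c_2,d_1},\ v^1_{c_2,d_2},\ \ldots,\ v^0_{c_k,d_{k-1}},\ v^1_{c_k,d_k}, \]
with $c_1=s$, $d_0=m$, $c_k=i$, and $d_k=j$. The function-type edges of $G'$ force $f_{c_t}(l_{d_{t-1}})=l_{d_t}$ for each $t$, while the graph-type edges force $(v_{c_t},v_{c_{t+1}})\in E$ for each $t<k$, so $p:=v_{c_1}v_{c_2}\cdots v_{c_k}$ is a genuine $v_s$-$v_i$ path in $G$. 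A short induction on $t$ using $l_{d_0}=1$ then yields $l_{d_t}=(f_{c_t}\circ\cdots\circ f_{c_1})(1)$, and in particular $f_p(1)=l_{d_k}=l_j$.

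The content here is essentially bookkeeping rather than invention: once the bipartite/alternating structure of $G'$ is isolated, both directions reduce to matching the two-step gadget ``apply $f_i$, then step along an edge of $G$'' with one unit of composition in $f_p$. The only mildly subtle point to pin down is the boundary condition at the source: the chosen start vertex $v^0_{sm}$ encodes the state ``at $v_s$ carrying lattice value $l_m=1$, about to apply $f_s$,'' which is exactly what forces any extracted walk to begin at $v_s$ and to have its path function evaluated on $1$, matching the MOP definition.
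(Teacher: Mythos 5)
Your proposal is correct and follows essentially the same route as the paper: both directions track the alternating two-layer structure of $G'$, matching each ``apply $f_i$, then follow an edge of $G$'' pair of edges with one step of the path-function composition, with the base case at $v^0_{sm}$ encoding $f_s(1)$. The only difference is presentational — you unroll the forward direction as an explicit walk rather than an induction on $\len(p)$, and the paper phrases both directions as inductions — so the substance is the same.
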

\begin{proof}
 \begin{itemize}
  \item \emph{If part:} Let $v_i$ be an arbitrary vertex in $V$. Let $p$ be a $v_s$-$v_i$ path in $G$ and let $f_p(1) = l_j$ for some $l_j \in L$. We prove the if part by induction on $\len(p)$.
  
  \emph{Base case:} Let $\len(p) = 0$ i.e.~$v_i = v_s$. Then $l_j = f_{p}(1) = f_s(1)$. Then $(v^0_{sm}, v^1_{sj}) \in E'$. So, $v^1_{sj}$ is reachable from $v^0_{sm}$ in $G'$.
  
  \emph{Inductive step: } Let $\len(p) = k$ and let the if part be true for all paths from $v_s$ in $G$ with length less than $k$. Let $p = v_s, \ldots, v_{i'}, v_i$ for some $v_{i'} \in V$. Let $p'$ be the path $p$ with $v_i$ excluded i.e.~$p = p' \cdot v_i$ where $\cdot$ is the path concatenation operator. Therefore, $p'$ is a path from $v_s$ with length $k - 1$. Let $f_p(1) = l_{j'}$ and $f_i(l_{j'}) = l_j$ for some $l_{j'} \in L$. By induction hypothesis, $v^1_{i'j'}$ is reachable from $v^0_{sm}$ in $G'$. By construction of $E'$, $(v^1_{i'j'}, v^0_{ij'}), (v^0_{ij'}, v^1_{ij}) \in E'$. So, $v^1_{ij}$ is reachable from $v^0_{sm}$ in $G'$.
  
  So, the if part is true for all $v_s$-$v_i$ paths in $G$.
  \item \emph{Only if part:} Let $v^1_{ij}$ is reachable from $v^0_{sm}$ in $G'$. Let $p$ be a path in $G'$ from $v^0_{sm}$ to $v^1_{ij}$. It is easy to show that length of $p$ is odd. So, we prove the only if part by induction on $\len(p)$ where $\len(p)$ is odd.
  
  \emph{Base case:} Let $\len(p) = 1$ i.e.~$(v^0_{sm}, v^1_{ij}) \in E'$. By construction of $E'$, $v_i = v_s$ and $f_s(1) = l_j$. So, there exists a trivial $v_s$-$v_i$ path $q$ in G, the path having only one node $v_s$, such that $f_q(1) = l_j$.
  
  \emph{Inductive step:} Let $\len(p) = 2k + 1$ for some integer $k$ and let the only if part be true for all odd length paths from $v^1_{sm}$ in $G'$ with length less than $2k + 1$. Let $p = v^1_{sm}, \ldots, v^1_{i'j'}, v^0_{ij'}, v^1_{ij}$ for some $v_{i'} \in V, l_{j'} \in L$. Let $p'$ be the path $p$ with $v^0_{ij'}$ and $v^1_{ij}$ excluded i.e.~$p = p' \cdot v^0_{ij'} \cdot v^1_{ij}$ where $\cdot$ is the path concatenation operator. By induction hypothesis, there exists a $v_s$-$v_{i'}$ path $q$ in $G$ such that $f_q(1) = l_{j'}$. By construction of $E'$, $(v_{i'}, v_i) \in E$ and $f_i(l_{j'}) = l_j$. So, there exists a $v_s$-$v_i$ path $q' = q \cdot v_i$ in $G$ such that $f_{q'}(1) = f_i \circ f_q(1) = f_i(f_q(1)) = f_i(l_{j'}) = l_j.$
  
  So, the only if part is true.
 \end{itemize}
 So, the theorem is true.
\end{proof}
\begin{figure}
\centering
\includegraphics{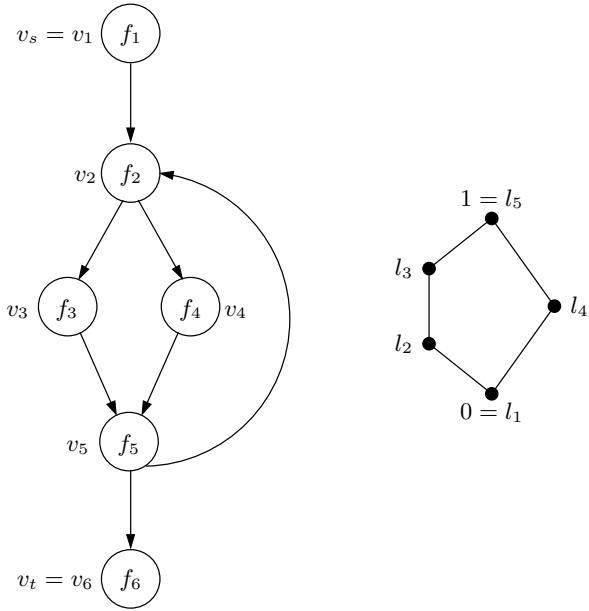}
\caption{A Data Flow Graph and a finite lattice}
\label{fig_MOPP}
\end{figure}
\begin{corollary}
 Let $\alpha = (G, D, M, v_\theta, l_\phi)$ be an instance of MOP with $G = (V, E, v_s, v_t)$ and $D = (L, \bigwedge, F)$. Let $\alpha' = (G', A, L, v^0_{sm}, v^1_\theta, l_\phi)$ be an instance of GMR as constructed in Section \ref{Subsec_MOP_redct}. Let $R_{\theta} = \left\{ l_i \mid v^1_{\theta i} \text{ is reachable from } v^0_{sm} \right\}$. Then $\mop(v_\theta) = l_\phi \iff \displaystyle \bigwedge_{l_i \in R_{\theta} } l_i = l_{\phi}$.
\end{corollary}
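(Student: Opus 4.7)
The plan is to derive this corollary as an immediate consequence of the preceding lemma by identifying the set $R_{\theta}$ with the image of the path-function map from $v_s$ to $v_\theta$.

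First I would unpack the definition $\mop(v_\theta) = \bigwedge_{p \in P_{s\theta}} f_p(1)$ and observe that since $L$ is finite, the (possibly infinite) family $\{f_p(1) \mid p \in P_{s\theta}\}$ can be replaced without loss by its underlying set, call it $S_\theta = \{l_j \in L \mid \exists p \in P_{s\theta} \text{ with } f_p(1) = l_j\}$. The meet over the multi-family equals the meet over $S_\theta$ by idempotence of $\bigwedge$, so $\mop(v_\theta) = \bigwedge_{l_j \in S_\theta} l_j$.

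Next I would apply the lemma: it states that for every vertex $v_i \in V$ and every $l_j \in L$, the existence of a $v_s$-$v_i$ path $p$ in $G$ with $f_p(1) = l_j$ is equivalent to reachability of $v^1_{ij}$ from $v^0_{sm}$ in $G'$. Specializing to $v_i = v_\theta$, this gives exactly $S_\theta = R_\theta$. Substituting, $\mop(v_\theta) = \bigwedge_{l_i \in R_\theta} l_i$, and comparing with $l_\phi$ on both sides yields the claimed biconditional.

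I do not expect any real obstacle here; the only subtle point worth mentioning explicitly is the reduction from an a priori infinite meet (over $P_{s\theta}$, which may be infinite when $G$ has cycles) to a finite meet over $R_\theta \subseteq L$, which is justified by finiteness of $L$ and idempotence of $\bigwedge$. Everything else is a direct rewriting using the lemma's biconditional.
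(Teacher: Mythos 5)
Your proof is correct and follows exactly the route the paper intends: the paper states this as an immediate corollary of the lemma (with no separate proof), relying on the identification of the set of attainable path-function values at $v_\theta$ with $R_\theta$ and the reduction of the possibly infinite meet to a finite one over $L$. Your explicit note on idempotence and finiteness justifying that reduction is a sound filling-in of the detail the paper leaves implicit.
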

\begin{figure}
\centering
\includegraphics{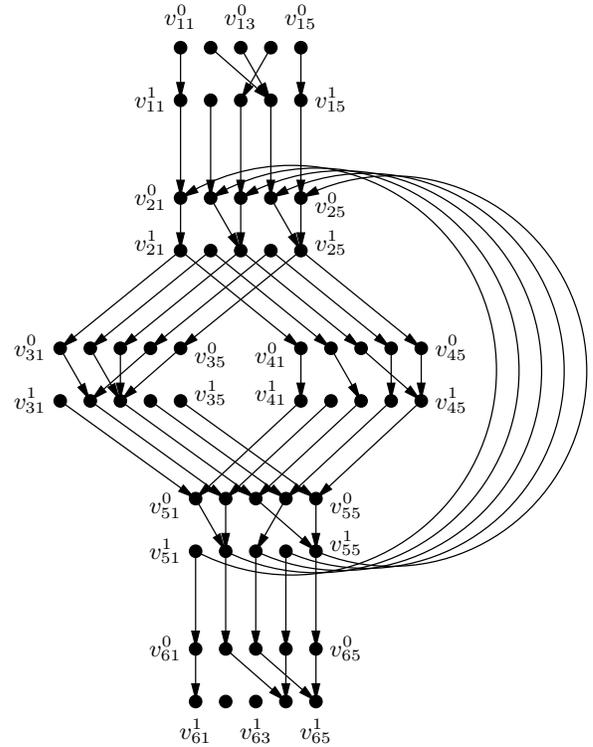}
\caption{The product graph corresponding to MOP instance in Figure \ref{fig_MOPP}}
\label{fig_GMR}
\end{figure}
\begin{theorem}
 $MOP$ is $\mathscr{NL}$\textit{-complete}.
\end{theorem}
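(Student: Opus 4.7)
The plan is to establish the two halves of the theorem, membership and hardness, separately.

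For membership, I would combine the corollary just proved with the $\mathscr{NL}$ algorithm for GMR from Section \ref{Section_GMR}. All that remains is to note that the reduction in Section \ref{Subsec_MOP_redct} is log-space computable: building the product graph just amounts to iterating over pairs $(v_i, l_j) \in V \times L$ and, for each, emitting the constant number of edges determined by $f_i(l_j)$, which fits comfortably into log-space counters. Since $\mathscr{NL}$ is closed under log-space many-one reducibility, this yields $\mop \in \mathscr{NL}$.

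For hardness, my plan is to exhibit a log-space reduction from Graph Reachability, which is known to be $\mathscr{NL}$-complete. Given a GR instance $(G, v_s, v_t)$ with $G = (V, E)$, I would construct an $\mop$ instance over the two-element lattice $L = \{0, 1\}$ with $\bigwedge$ being AND. Let $G' = (V', E', v_s', v_t')$ with $V' = V \cup \{v_s', v_t'\}$ (with $v_s', v_t'$ fresh vertices) and $E' = E \cup \{(v_s', v) : v \in V\} \cup \{(v, v_t') : v \in V\}$. Assign $M(v_s)$ to be the constant-$0$ function and $M(u)$ to be the identity function for every other $u \in V'$, set $v_\theta = v_t$ (the original GR target, not the new sink $v_t'$), and set $l_\phi = 0$. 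The intended correctness statement is that $\mop(v_t) = 0$ if and only if $v_t$ is reachable from $v_s$ in $G$: in the forward direction, any $v_s$-to-$v_t$ path in $G$ prepended with $v_s'$ is a $v_s'$-to-$v_t$ path in $G'$ that visits $v_s$, so the constant-$0$ at $v_s$ followed by identities drives its value to $0$; in the reverse direction, any $v_s'$-to-$v_t$ path in $G'$ must avoid $v_s$ (otherwise its suffix from $v_s$ onward would use only edges of $E$ and would witness reachability of $v_t$ from $v_s$ in $G$), so every such path uses only identity functions and evaluates to $1$.

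The main obstacle I anticipate is satisfying the CFG constraints (unique entry of indegree $0$, unique exit of outdegree $0$, every vertex reachable from the entry) without contaminating the meet at $v_t$. Directing the query at $v_t$ rather than at the sink $v_t'$ neutralises the "funnel" edges $(v, v_t')$, which exist only to give $v_t'$ outdegree $0$ and so cannot appear in any path ending at $v_t$; and the shortcut edges $(v_s', v)$ for $v \neq v_s$, added purely to make every vertex reachable from the entry, necessarily bypass $v_s$ and so contribute only paths of value $1$. Everything else, including log-space computability of the construction and verification that both assigned functions are monotone, is routine.
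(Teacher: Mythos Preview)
Your proposal is correct and follows the same two-part strategy as the paper: membership via the log-space reduction to GMR combined with the $\mathscr{NL}$ algorithm of Section~\ref{Section_GMR}, and hardness via Graph Reachability. The only difference is one of explicitness: the paper simply asserts that GR ``is an instance of'' MOP, whereas you spell out a concrete reduction that pads the input graph with a fresh super-source $v_s'$ and super-sink $v_t'$ to meet the control-flow-graph requirements (unique indegree-$0$ entry, unique outdegree-$0$ exit, universal reachability from the entry), and you verify that the added edges cannot corrupt the meet at $v_\theta=v_t$. Your extra care here is warranted---a raw GR instance need not satisfy those structural constraints---so your argument is in fact a cleaner justification of the same reduction the paper has in mind.
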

\begin{proof}
 Algorithm \ref{algo_GMR} is $\mathscr{NL}$ computable. It is easy to see that the above reduction is log space computable. So, MOP is in $\mathscr{NL}$. Since \emph{Graph Reachability} (GR) problem is an instance of MOP and GR is $\mathscr{NL}$\textit{-complete}, it follows that MOP is also $\mathscr{NL}$\textit{-complete}.
\end{proof}
It is easy to see that the above proofs do not use the monotonicity of the data flow framework. So, the $\mathscr{NL}$\textit{-completeness} result holds even if the data flow framework is not monotone.
\newpage
\section{Acknowledgment}
We would like to thank Dr.~Vineeth Paleri and Ms.~Rekha R.~Pai for introducing us to the problem and for helpful discussions. We would also like to thank Dr.~Priya Chandran for reviewing our work.
%

\begin{thebibliography}{10}

\bibitem{Aho2006Compilers}
A.~V. Aho, M.~S. Lam, R.~Sethi, and J.~D. Ullman.
\newblock {\em Compilers: Principles, Techniques, and Tools (2nd Edition)}.
\newblock Prentice Hall, 2 edition, Sept. 2006.

\bibitem{borodin1977relating}
A.~Borodin.
\newblock On relating time and space to size and depth.
\newblock {\em SIAM Journal on Computing}, 6(4):733--744, 1977.

\bibitem{Davey2002Introduction}
B.~A. Davey and H.~A. Priestley.
\newblock {\em Introduction to Lattices and Order}.
\newblock Cambridge University Press, 2 edition, May 2002.

\bibitem{Goldschlager:1977:MPC:1008354.1008356}
L.~M. Goldschlager.
\newblock The monotone and planar circuit value problems are log space complete
  for p.
\newblock {\em SIGACT News}, 9(2):25--29, July 1977.

\bibitem{GreHooRuz1995}
R.~Greenlaw, H.~J. Hoover, and W.~L. Ruzzo.
\newblock {\em Limits to Parallel Computation: P-Completeness Theory}.
\newblock Oxford University Press, USA, Apr. 1995.

\bibitem{springerlink:10.1007/BF00290339}
J.~B. Kam and J.~D. Ullman.
\newblock Monotone data flow analysis frameworks.
\newblock {\em Acta Informatica}, 7:305--317, 1977.
\newblock 10.1007/BF00290339.

\bibitem{Kildall:1973:UAG:512927.512945}
G.~A. Kildall.
\newblock A unified approach to global program optimization.
\newblock In {\em Proceedings of the 1st annual ACM \balancecolumns SIGACT-SIGPLAN symposium on
  Principles of programming languages}, POPL '73, pages 194--206, New York, NY,
  USA, 1973. ACM.

\bibitem{Papadimitriou93}
C.~H. Papadimitriou.
\newblock {\em Computational Complexity}.
\newblock Addison Wesley, 1993.

\bibitem{springerlink:10.1007/BF03036473}
T.~Reps.
\newblock On the sequential nature of interprocedural program-analysis
  problems.
\newblock {\em Acta Informatica}, 33:739--757, 1996.
\newblock 10.1007/BF03036473.

\bibitem{Reps:1995:PID:199448.199462}
T.~Reps, S.~Horwitz, and M.~Sagiv.
\newblock Precise interprocedural dataflow analysis via graph reachability.
\newblock In {\em Proceedings of the 22nd ACM SIGPLAN-SIGACT symposium on
  Principles of programming languages}, POPL '95, pages 49--61, New York, NY,
  USA, 1995. ACM.

\bibitem{sipser2005}
M.~Sipser.
\newblock {\em {Introduction to the Theory of Computation}}.
\newblock Course Technology, 2 edition, Feb. 2005.

\bibitem{tarski1955lattice}
A.~Tarski.
\newblock A lattice-theoretical fixpoint theorem and its applications.
\newblock {\em Pacific journal of Mathematics}, 5(2):285--309, 1955.

\end{thebibliography}

\appendix
\section{Lattice representation}
\begin{algorithm}
\caption{Converts a lattice given as a covering relation to a lattice as a poset}
\begin{algorithmic}[1]
\Procedure{CovRel-to-Poset}{$L, \prec$}
\For{$i \leftarrow 1, n$}
 \For{$j \leftarrow 1, n$}
 \State Set $l_i \leqslant l_j$ to value False
 \EndFor
\EndFor
\For{$i \leftarrow 1, n$}
 \For{$j \leftarrow 1, n$}
  \If{$l_j$ is reachable from $l_i$}
   \State Set $l_i \leqslant l_j$ to value True
  \EndIf
 \EndFor
\EndFor
\EndProcedure
\end{algorithmic}
\label{algo_CovRel-to-Poset}
\end{algorithm}
\begin{algorithm}
\caption{Converts a lattice given as a poset to a lattice as an algebraic structure}
\begin{algorithmic}[1]
\Procedure{Poset-to-AlgStr}{$L, \leqslant$}
\For{$i \leftarrow 1, n$}
 \For{$j \leftarrow 1, n$}
  \State Set $l_i \bigvee l_j$ to value $1$
  \State Set $l_i \bigwedge l_j$ to value $0$
 \EndFor
\EndFor
\For{$i \leftarrow 1, n$}
 \For{$j \leftarrow 1, n$}
  \For{$k \leftarrow 1, n$}
   \If{$l_i \leqslant l_k$ \textbf{and} $l_j \leqslant l_k$}
    \If{$l_k \leqslant l_i \bigvee l_j$}
     \State Set $l_i \bigvee l_j$ to value $l_k$
    \EndIf
   \ElsIf{$l_k \leqslant l_i$ \textbf{and} $l_k \leqslant l_j$}
    \If{$l_i \bigwedge l_j \leqslant l_k$}
     \State Set $l_i \bigwedge l_j$ to value $l_k$
    \EndIf
   \EndIf
  \EndFor
 \EndFor
\EndFor
\EndProcedure
\end{algorithmic}
\label{algo_Poset-to-AlgStr}
\end{algorithm}
A lattice can be represented as a poset $(L, \leqslant)$ \cite[p.~33]{Davey2002Introduction}, as a covering relation for a poset $(L,\prec)$ \cite[p.~11]{Davey2002Introduction} or as an algebraic structure $(L, \bigvee, \bigwedge)$ \cite[p.~39]{Davey2002Introduction}. In this section, we give a non-deterministic log space algorithms to convert a lattice given as a poset or a covering relation to a lattice as an algebraic structure. This makes the completeness result of MOP w.r.t.~the class $\mathscr{NL}$ independent of the particular representation of the lattice.

A covering relation $(L, \prec)$ of a poset can be viewed as a graph where $L$ is the set of vertices and $\prec$ is the set of edges. So, reachability is defined as it is done for a graph.
\newpage
Algorithm \ref{algo_CovRel-to-Poset} (resp.~Algorithm \ref{algo_Poset-to-AlgStr}) converts a lattice given as a covering relation of a poset (resp.~a poset) to the lattice as a poset (resp.~an algebraic structure). The composition of the two algorithms converts a lattice given as a covering relation of a poset to the lattice as an algebraic structure.

\begin{lemma}
 Given a poset $(L, \leqslant)$ or a covering relation $(L, \prec)$) representation of a lattice, its algebraic structure representation $(L, \bigvee, \bigwedge)$ can be computed in non-deterministic log space.
\end{lemma}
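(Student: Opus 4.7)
The plan is to establish that each of Algorithm \ref{algo_CovRel-to-Poset} and Algorithm \ref{algo_Poset-to-AlgStr} runs in (nondeterministic) logarithmic space, and then to argue that their composition preserves $\mathscr{NL}$ membership.

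For Algorithm \ref{algo_CovRel-to-Poset}, the guiding observation is that in a poset presented by its covering relation, $l_i \leqslant l_j$ holds if and only if $l_j$ is reachable from $l_i$ in the directed graph whose edges are the covers. The two outer loops iterate over $O(|L|^2)$ pairs using counters of size $O(\log|L|)$, and each reachability test on Line 8 is precisely an instance of GR, which lies in $\mathscr{NL}$ by the stated fact. Hence every bit of the output poset representation can be produced by a nondeterministic log-space machine.

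For Algorithm \ref{algo_Poset-to-AlgStr}, the three nested loops again use $O(\log|L|)$ counters, and all internal tests are either poset comparisons or comparisons of lattice elements being accumulated. To emit a given entry $l_i \bigvee l_j$, it suffices to scan all candidates $l_k$, keeping the smallest common upper bound of $\{l_i, l_j\}$ seen so far; the case of $\bigwedge$ is symmetric. Given direct access to the poset relation, this is a deterministic logarithmic-space computation per output entry.

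The main obstacle is the composition, because we must not materialise the intermediate poset representation (which would consume $\Omega(|L|^2)$ space). The strategy is to answer each query $l_i \leqslant l_j$ raised by Algorithm \ref{algo_Poset-to-AlgStr} on demand by invoking the reachability subroutine on the covering-relation graph. What needs to be checked is that this substitution keeps the total space usage within $\mathscr{NL}$; this follows from the standard fact that a deterministic log-space machine making oracle calls to an $\mathscr{NL}$ predicate still works within $\mathscr{NL}$, since each oracle invocation can be inlined and its log-space work tape reused after the answer is returned. Finally, because the join and meet values in a lattice are uniquely determined, computing each output entry of the algebraic structure independently in $\mathscr{NL}$ yields an overall $\mathscr{NL}$ procedure for producing the full representation, as required.
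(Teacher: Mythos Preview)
Your proposal is correct and follows essentially the same approach as the paper: show Algorithm~\ref{algo_CovRel-to-Poset} is in $\mathscr{NL}$ via the GR subroutine, show Algorithm~\ref{algo_Poset-to-AlgStr} is deterministic log space, and then invoke closure of $\mathscr{NL}$ under log-space reductions for the composition. The only difference is that you spell out the on-demand oracle-call argument for the composition explicitly, whereas the paper dispatches this step with a single citation to a standard textbook result.
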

\begin{proof}
  Line 9 of Algorithm \ref{algo_CovRel-to-Poset} takes non-deterministic log space since GR takes non-deterministic log space \cite[Example 2.10 on p.~48]{Papadimitriou93}. All other lines of the two algorithms take at most log space.
  So, Algorithm \ref{algo_CovRel-to-Poset} takes non-deterministic log space while Algorithm \ref{algo_Poset-to-AlgStr} takes log space. The composition of the two algorithms takes non-deterministic log space \cite[Theorem 8.23 on p.~324]{sipser2005}. So, the conversions can be done in non-deterministic log space.
\end{proof}
It may be noted that the lattice in Section \ref{Section_MFP} is of constant size. So, it can be converted to a poset or a covering relation in constant time. So, the completeness result of MFP w.r.t.~the class $\mathscr{P}$ is also independent of the particular representation of the lattice.
%
%
\end{document}